\documentclass{llncs}
\usepackage[utf8]{inputenc}
\usepackage{amsmath}
\usepackage{amssymb}
\usepackage{graphicx}
\usepackage{tikz}
\usepackage{caption}
\usepackage{float}
\usepackage{framed}
\usepackage{verbatim}
\usepackage{algorithm}
\usepackage{algpseudocode}
\sloppy

\title{An output-sensitive algorithm for the minimization of 2-dimensional String Covers}

\author{Alexandru Popa\inst{1,2} \and Andrei Tanasescu\inst{3}}

\institute{University of Bucharest \and
National Institute of Research and Development in Informatics \and Politehnica University of Bucharest \\
E-mail: \email{alexandru.popa@fmi.unibuc.ro, andrei.tanasescu@mail.ru}}

\begin{document}
\maketitle{}

\begin{abstract} String covers are a powerful tool for analyzing the quasi-periodicity of 1-dimensional data and find applications in automata theory, computational biology, coding and the analysis of transactional data. A \emph{cover} of a string $T$ is a string $C$ for which every letter of $T$ lies within some occurrence of $C$. String covers have been generalized in many ways, leading to \emph{k-covers}, \emph{$\lambda$-covers}, \emph{approximate covers} and were studied in different contexts such as \emph{indeterminate strings}.

In this paper we generalize string covers to the context of 2-dimensional data, such as images. We show how they can be used for the extraction of textures from images and identification of primitive cells in lattice data. This has interesting applications in image compression, procedural terrain generation and crystallography. 
\end{abstract}

\section{Motivation}

Redundancy is an ubiquitous phenomenon in engineering and computer science ~\cite{ming1990kolmogorov,muchnik2003almost}. Periodicity is the most common and useful form of redundancy. Periodicity is a key phenomenon when analyzing physical data such as an analogue signal. Natural data is very redundant or repetitive and exhibits some patterns or regularities~\cite{HAVLIN1995171,timmermans2017cyclical,tychonoff1935theoremesL} which we may assert to be the intended information~\cite{searle1980speech} within the data. Periodicity itself has been thoroughly studied in various fields such as signal processing~\cite{sethares1999periodicity}, bioinformatics~\cite{brodzik2007quaternionic}, dynamical systems~\cite{katok1997introduction} and control theory~\cite{bacciotti2006liapunov}, each bringing its own insights. 

However, natural data is imperfect. It is highly unlikely that natural data can ever be periodic. In fact, the data is \emph{almost} or \emph{quasi-}periodic~\cite{ApostolicoB97}. This has been firstly studied over strings, the most general representation of digital data~\cite{middlestead2017digital}. 

For example, assume that we  want to send the word $aba$ over a noisy channel as a digital signal where letters are modulated using \emph{amplitude shift keying}~\cite{middlestead2017digital}. Since the simple transmission is unlikely to  yield the result due to the imperfect transmission channel, we add redundancy and thus send the word $aba$ multiple times. However, when errors occur, the received signal only partially retains its periodicity. 

\section{Our Results}

In this paper we study the generalization of the String Cover operator on finite-dimensional images. First, throughout this paper, given two integers $a$ and $b$, $a \le b$, we define $\overline{a,b} = \{a, a+1, \dots, b\}$. Then a 2-dimensional string (or an image) is function $I:\overline{0,H-1}\times\overline{0,W-1}\rightarrow\Sigma$, where $\Sigma$ is the alphabet. Let $Mat_{H,W}$ be the set of all matrices with $H$ rows and $W$ columns. For a matrix $M$ we define $M^i_j$ to be the element from row $i$ and column $j$.

\begin{definition}[2-dimensional string cover] A \emph{cover} of a 2D image $T$ is a 2D image $C$ for which every element of $T$ lies within some occurrence of $C$. 
\end{definition}

In Section~\ref{sec:image_covers}, we find two alternative ways of formalizing the 2D cover problem, by using masks 
and prove their equivalence. We then turn our attention towards the decision problem: 
\begin{problem}[Image Cover Decision] Given two images $T$ and $C$, does one cover the other? 
\end{problem}

We give an $\mathcal{O}\left(WH\right)$ algorithm based upon Bird's 1977~\cite{bird1977two} 2-dimensional matching algorithm. Then, using this algorithm we study the minimization problem (Section~\ref{sec:minimal_image_covers}).

\begin{problem}[Weak Minimal Image Cover] Given an image $T\in\mathrm{Mat}_{H,\,W}\left(\Sigma\right)$ and an evaluation function $eval:\overline{1,\,h}\times\overline{1,\,w}\rightarrow \mathbb{R}$, where $h \le H$ and $w \le W$, which induces an order onto the covers, which is the cover $C\in\mathrm{Mat}_{h,\,w}\left(\Sigma\right)$ of $T$ minimal with respect to $eval\left(h,\,w\right)$?
\end{problem}

We give an $\mathcal{O}\left(W^2H^2\right)\Theta\left(eval\right)$ algorithm. Since the minimization problem is actually $\Omega\left(WH\right)\Theta\left(eval\right)$, we aim for a better algorithm. Using sorting of the input candidates according to $eval$ we obtain $\mathcal{O}\left(nWH\right)\Theta\left(eval\right)$ (the bound does not contain the time necessary for sorting), where the $n$-th entry in the vector sorted by $eval$ determines a cover of $T$. Note that to assume that the candidates are sorted is not in general realistic, so a more honest complexity bound is $\mathcal{O}\left(WH\left(n+\log\left(WH\right)\Theta\left(eval\right)\right)\right)$. However, there is a very important optimization criterion where the sorting is very cheap, namely the \emph{size} of an image.

\begin{problem}[Strong Minimal Image Cover] Given an image $T\in\mathrm{Mat}_{H,\,W}\left(\Sigma\right)$ which is the cover $C\in\mathrm{Mat}_{h,\,w}\left(\Sigma\right)$ of $T$ minimal with respect to its area (that is, $wh$),  $\ell_1$ norm (that is, $w + h$) and $\ell_\infty$ norm (that is, $\max\left(w,\,h\right))$?
\end{problem}

For this problem we augment the general minimization algorithm with a preprocessing routine, based on the optimal 1-dimensional Minimal String Cover algorithm~\cite{ApostolicoFI91}, which reduces the number of candidate pairs that we have to check from $\Theta\left(WH\right)$ to $\mathcal{O}\left(1\right)$ on the average case, reducing the complexity to $\Theta\left(WH\right)$ on the average case and, particularly, $\mathcal{O}\left(W\right)$ in the worst-case for $H=1$. We argue that the use of this routine never hinders performance and offers the same boost for the general case of an unknown $eval$ function.

We conclude the article with a few very interesting applications of other generalizations of the Minimal String Cover Problem (Sections~\ref{sec:lattices} and~\ref{sec:graphics}) such as $k$-covers~\cite{Iliopoulos98} and the Approximate String Cover Problem introduced by Amir et al.~\cite{AmirLLP17,AmirLLLP17} to lattice unit-cell recognition from generic images, detection of the unit cells of some quasicrystals~\cite{wlodawer2013protein}, extraction of the elementary set of tiles in a Wang Tiling, recognizing the minimal (quasi)periodic Wang Tile pattern in an image and the minimal modification required of an image for the existence of a non-trivial minimal (quasi)periodic Wang Tile pattern.

\section{Image Covers}
\label{sec:image_covers}

The simplest class of images is that of binary images, i.e. $\Sigma \cong \lbrace 0,\,1\rbrace$. Binary images can be thought of as sets over $\mathbb{Z}^2$, as follows: the set contains the position (i.e., row and column) of the elements of the binary image that have value $1$.

\begin{example}
The set $\{ (1,2), (2, 2), (3,3)\}$ corresponds to the image
 $$
  \left[ {\begin{array}{ccc}
   0 & 1 & 0 \\
   0 & 1 & 0 \\
   0 & 0 & 1 \\  
\end{array} } \right].
$$
\end{example}

Given a set $S$  and an element $x$, the characteristic function of $S$, denoted by $\chi_S (x)$ has value $1$ if $x \in S$, and $0$ otherwise.

\begin{definition} A mask of an image $T$ with respect to an image $C$ is a binary image $M$ which marks the first position of some occurrences of $C$ in $T$. 

Formally if $T\in\mathrm{Mat}_{H,\,W}\left(\Sigma\right)$ and $C\in\mathrm{Mat}_{h,\,w}\left(\Sigma\right)$ then $M\in\mathrm{Mat}_{H,\,W}\left(\lbrace0,\,1\rbrace\right)$ is a mask of $T$ with respect to $C$ if 
$$\forall i\in\overline{1,\,H},\,j\in\overline{1,\,W}, \, M^i_j=1 \Rightarrow T^{i+y-1}_{j+x-1}=C^y_x\ ,\,y\in\overline{1,\,h},\,x\in\overline{1,\,w}.$$
\end{definition}

By the correspondence between binary images and sets, there exists a maximal mask with respect to cardinality and it identifies all occurrences of an image in another.

\begin{definition} The maximal mask of an image $T$ with respect to an image $C$ is a binary image $M^*$ which marks the first position of all occurrences of $C$ in $T$. 

Formally if $T\in\mathrm{Mat}_{H,\,W}\left(\Sigma\right)$ and $C\in\mathrm{Mat}_{h,\,w}\left(\Sigma\right)$ then $M^*\in\mathrm{Mat}_{H,\,W}\left(\lbrace0,\,1\rbrace\right)$ is the maximal mask of $T$ with respect to $C$ if 
$$\forall i\in\overline{1,\,H},\,j\in\overline{1,\,W},  {M^*}^i_j=1\Longleftrightarrow T^{i+y-1}_{j+x-1}=C^y_x\ ,\,y\in\overline{1,\,h},\,x\in\overline{1,\,w}.$$
\end{definition}

Extrapolating from the definition of string covers, we can informally define a cover of an image. A \emph{cover} of an image $T$ is an image $C$ for which every element of $T$ lies within some occurrence of $C$. We can formalize this definition using masks. We introduce two equivalent definition candidates.

\begin{definition}[Weak Image Covers] 

If $T\in\mathrm{Mat}_{H,\,W}\left(\Sigma\right)$ and $C\in\mathrm{Mat}_{h,\,w}\left(\Sigma\right)$, then $C$ covers $T$ if there exists some mask $M$ of $T$ with respect to $C$ such that:
$$\forall Y\in\overline{1,\,H},\,X\in\overline{1,\,W}\ \exists i\in\overline{Y-h+1,\,Y},\,j\in\overline{X-w+1,\,X}\ M^i_j=1.$$
\end{definition}

Equivalently, we may define Image Covers with respect to the maximal mask:

\begin{definition}[Strong Image Covers] \label{def:strong} 

If $T\in\mathrm{Mat}_{H,\,W}\left(\Sigma\right)$ and $C\in\mathrm{Mat}_{h,\,w}\left(\Sigma\right)$, then $C$ covers $T$ if the maximal mask $M^*$ of $T$ with respect to $C$ is such that:
$$\forall Y\in\overline{1,\,H},\,X\in\overline{1,\,W}\ \exists i\in\overline{Y-h+1,\,Y},\,j\in\overline{X-w+1,\,X}\ {M^*}^i_j=1$$
\end{definition}

By these definitions a cover $C\in\mathrm{Mat}_{h,\,w}\left(\Sigma\right)$ of an image $T\in\mathrm{Mat}_{H,\,W}\left(\Sigma\right)$ can be identified with the $\left(h,\,w\right)$ pair. 

The weak definition is a more natural extension of the definition of String Covers, while the strong definition provides us with a more clear understanding of the combinatorial properties of Image Covers. For example, the strong definition suggests that Image Covers are susceptible to dynamic programming, which we later use to obtain the minimal cover.

\begin{theorem}
The weak and strong definitions are equivalent.
\end{theorem}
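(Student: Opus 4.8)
The plan is to show that, as statements about a fixed pair $(T,C)$ with $T\in\mathrm{Mat}_{H,\,W}\left(\Sigma\right)$ and $C\in\mathrm{Mat}_{h,\,w}\left(\Sigma\right)$, the assertion ``some mask $M$ of $T$ with respect to $C$ satisfies the covering condition'' and the assertion ``the maximal mask $M^*$ of $T$ with respect to $C$ satisfies the covering condition'' are logically equivalent. The whole argument rests on one elementary observation, which I would establish first: every mask is pointwise dominated by the maximal mask, i.e.\ if $M$ is any mask of $T$ with respect to $C$ then $M^i_j=1\Rightarrow {M^*}^i_j=1$ for all $i\in\overline{1,\,H}$, $j\in\overline{1,\,W}$. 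This is immediate from the definitions, since the implication defining a mask is exactly the forward half of the biconditional defining $M^*$; the same reasoning shows that $M^*$ is itself a mask of $T$ with respect to $C$.

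The direction \emph{strong $\Rightarrow$ weak} is then trivial: if $M^*$ witnesses the condition of Definition~\ref{def:strong}, then, $M^*$ being a mask, setting $M:=M^*$ exhibits a mask satisfying the Weak Image Cover condition verbatim.

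For \emph{weak $\Rightarrow$ strong}, suppose a mask $M$ satisfies the weak condition. Fix arbitrary $Y\in\overline{1,\,H}$ and $X\in\overline{1,\,W}$; the weak condition supplies indices $i\in\overline{Y-h+1,\,Y}$ and $j\in\overline{X-w+1,\,X}$ with $M^i_j=1$. By the domination fact ${M^*}^i_j=1$, so the very same pair $(i,j)$ witnesses the existential quantifier in the strong condition for this $(Y,X)$. As $(Y,X)$ was an arbitrary element of $\overline{1,\,H}\times\overline{1,\,W}$, this shows $M^*$ satisfies Definition~\ref{def:strong}, completing the equivalence.

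I do not anticipate a real obstacle; the one place to be slightly careful is the boundary behaviour of the ranges $\overline{Y-h+1,\,Y}$ and $\overline{X-w+1,\,X}$, whose lower endpoints may drop below $1$. This causes no trouble because a mask can be $1$ only at positions $(i,j)$ for which the whole of $C$ fits inside $T$ — that is, $1\le i\le H-h+1$ and $1\le j\le W-w+1$ — so the out-of-bounds indices carry no $1$-entries and can be ignored on both sides of the equivalence. The only genuinely conceptual content is the monotonicity fact, and it is one line.
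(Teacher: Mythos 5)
Your proof is correct and takes essentially the same route as the paper's: both arguments reduce the equivalence to the pointwise domination of any mask by the maximal mask ($M\le M^*$) together with the observation that $M^*$ is itself a mask, the paper merely dressing the domination fact in the language of the bijection between binary images and subsets of $\overline{1,\,H}\times\overline{1,\,W}$ and the induced order, while you derive it directly from the definitions. Nothing is missing.
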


\begin{proof}
Consider the set $\mathcal{S}=\overline{1,\,H}\times\overline{1,\,W}$. There exists a bijection between its power set, $\mathcal{P}\left(\mathcal{S}\right)$, and the W-long, H-tall binary images $\mathrm{Mat}_{H,\,W}\left(\lbrace 0,\,1\rbrace\right)$ as explained at the beginning of the section. Formally, the bijection $f$ is defined as 
$$ \mathcal{P}\left(\mathcal{S}\right)\ni S\leftrightarrow f(S)\in\mathrm{Mat}_{H,\,W}\left(\lbrace 0,\,1\rbrace\right):\ {f(S)}^i_j=\chi_{\mathcal{S}}\left(\left(i,\,j\right)\right)\forall i\in\overline{1,\,H},\, j\in\overline{1,\,W}.$$

However, the image of the Boolean algebra $\left(\mathcal{P}\left(\mathcal{S}\right),\,\cup,\,\cap,\,\bar{.},\,\emptyset,\,\mathcal{S}\right)$ is thus by $f$ onto $\mathrm{Mat}_{H,\,W}\left(\lbrace 0,\,1\rbrace\right)$. The new structure can be verified to be $$\left(\mathrm{Mat}_{H,\,W}\left(\lbrace 0,\,1\rbrace\right),\,\max,\,\min,\,\mathbf{M}\rightarrow\mathbf{1}-\mathbf{M},\,\mathbf{0},\,\mathbf{1}\right)$$

Thus the image of the inclusion order $\subseteq$ is the order $\leq$ and so, if there exists a mask $M$ such that
$$\forall Y\in\overline{1,\,H},\,X\in\overline{1,\,W}\ \exists i\in\overline{Y-h+1,\,Y},\,j\in\overline{X-w+1,\,X}\ M^i_j=1$$
then since ${M}\leq{M^*}$ we also have
$$\forall Y\in\overline{1,\,H},\,X\in\overline{1,\,W}\ \exists i\in\overline{Y-h+1,\,Y},\,j\in\overline{X-w+1,\,X}\ {M^*}^i_j=1$$
and vice versa: if $M^*$ satisfies the later, then there exists at least one such mask $M$ (precisely $M^*$) which satisfies the former. Thus, the two definitions are indeed equivalent.
\qed\end{proof}

While from a formal standpoint the two definitions are equivalent, from a computational standpoint it is more convenient for us to work with the strong definition, since we do not have to consider all masks.

\begin{lemma}
\label{lem:maximal_mask}
Given two images $T\in\mathrm{Mat}_{H,\,W}\left(\Sigma\right)$ and $C\in\mathrm{Mat}_{h,\,w}\left(\Sigma\right)$ the construction of the maximal mask of $T$ with respect to $C$ takes $\Theta\left(WH\right)$ time.
\end{lemma}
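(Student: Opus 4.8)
The plan is to recognize that constructing $M^*$ is precisely the problem of reporting every occurrence of the pattern $C$ inside the text $T$, and then to invoke a linear-time two-dimensional string matching algorithm. Indeed, unwinding the definition of the maximal mask, ${M^*}^i_j=1$ if and only if the $h\times w$ sub-image of $T$ with top-left corner $(i,j)$ equals $C$; so once the list of all such corners is available we initialise $M^*$ to $\mathbf{0}$ and flip exactly those positions to $1$. If $h>H$ or $w>W$ the list is empty and $M^*=\mathbf{0}$, which already costs $\Theta(WH)$ just to write down; so we may assume $h\le H$ and $w\le W$.

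For the upper bound I would use Bird's algorithm~\cite{bird1977two}. View $C$ as the sequence of its rows $C_1,\dots,C_h$, each a string of length $w$ over $\Sigma$, and build an Aho--Corasick automaton for the set $\{C_1,\dots,C_h\}$; this preprocessing costs $\mathcal{O}(hw)$. Labelling each row $C_y$ by the identifier of the accepting state it reaches turns $C$ into a one-dimensional pattern $P=(p_1,\dots,p_h)$ over the alphabet of row-identifiers, for which we compute the KMP failure function in $\mathcal{O}(h)$ time. Next, feed each of the $H$ rows of $T$ through the automaton: since all pattern rows have the same length $w$, this yields for every position $(i,j)$ the identifier of the unique pattern row (if any) ending at column $j$ of row $i$, at cost $\mathcal{O}(W)$ per text row and $\mathcal{O}(WH)$ in total. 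Finally, for each of the $W$ columns $j$, run the KMP automaton for $P$ along the length-$H$ column of row-identifiers; a match ending at row $i$ certifies an occurrence of $C$ with bottom-right corner $(i,j)$, so we set ${M^*}^{i-h+1}_{j-w+1}:=1$. Each column costs $\mathcal{O}(H)$, hence $\mathcal{O}(WH)$ overall. Summing the four stages gives $\mathcal{O}(hw)+\mathcal{O}(h)+\mathcal{O}(WH)+\mathcal{O}(WH)=\mathcal{O}(WH)$ because $h\le H$ and $w\le W$.

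For the matching lower bound, $M^*\in\mathrm{Mat}_{H,\,W}(\{0,1\})$ has $WH$ entries, each of which the algorithm must output, so the construction is $\Omega(WH)$; together with the above this gives $\Theta(WH)$. The step needing the most care is the dependence on the alphabet: the $\mathcal{O}(W)$-per-row bound for the Aho--Corasick scan assumes transitions are evaluated in $\mathcal{O}(1)$, which holds for a constant-size alphabet or, in general, after hashing the at most $hw$ distinct symbols occurring in $C$ (a text symbol outside this set simply resets the automaton to its root and can never complete a row match). A minor additional subtlety is that the row-relabelling must assign equal rows of $C$ the same identifier, which is automatic since identical rows reach the same accepting state.
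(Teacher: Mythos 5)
Your proposal is correct and follows essentially the same route as the paper: both establish the $\Omega(WH)$ lower bound from the output size and obtain the $\mathcal{O}(WH)$ upper bound by running Bird's algorithm (Aho--Corasick on the pattern rows plus KMP on the column of row identifiers) and transcribing its list of occurrences into $M^*$. You merely unfold the internals of Bird's algorithm that the paper cites as a black box, and handle the $H=1$ case implicitly rather than separately.
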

\begin{proof}
Since the size of the output is $WH$ we have the lower bound $\Omega\left(WH\right)$. We effectively only have to prove the upper bound of $\mathcal{O}\left(WH\right)$.

We begin by studying the case $H=1$. In this case the maximal mask of $T$ with respect to $C$ consists of all occurrences of $C$ in $T$. This can be found in linear time, for example using the Knuth-Morris-Pratt algorithm (KMP~\cite{knuth1977fast}), with a runtime of $\mathcal{O}\left(W+w\right)$, which is $\mathcal{O}\left(WH\right)$ since $H=h=1$ and $w<W$.

For the case $H\neq 1$, we look for a two dimensional generalization of the Knuth-Morris-Pratt algorithm. One such generalization is Bird's algorithm~\cite{bird1977two} which uses KMP and a generalization of it due to Aho and Corasick~\cite{aho1975efficient} to find the rows and then columns where the pattern occurs.  

The output of Bird's algorithm is the list of occurrences of $C$ in $T$, i.e. the pairs $\left(i,j\right)$ such that ${M^*}^i_j=1$. Consequently, we can recover $M^*$ by taking ${M^*}^i_j= stage\left(i+h-1,\,j+w-1\right)$. This yields the maximal mask in $\mathcal{O}\left(WH+wh\right)=\mathcal{O}\left(WH\right)$ time.
\qed\end{proof}

\begin{theorem}[Image Cover Decision]
Given two images $T\in\mathrm{Mat}_{H,\,W}\left(\Sigma\right)$ and $C\in\mathrm{Mat}_{h,\,w}\left(\Sigma\right)$ checking if $C$ is a cover of $T$ takes $\Theta\left(WH\right)$ time (Algorithm~\ref{alg:recognition}).
\end{theorem}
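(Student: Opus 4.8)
The plan is to separate the lower and upper bounds. The lower bound $\Omega\left(WH\right)$ is immediate: any correct algorithm must at least inspect $T$, which has $WH$ entries. The content of the theorem is therefore the matching upper bound $\mathcal{O}\left(WH\right)$, realized by Algorithm~\ref{alg:recognition}.

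For the upper bound I would work with the strong formulation of Definition~\ref{def:strong}, which the preceding theorem shows is equivalent to the weak one; its advantage is that it refers only to the single maximal mask $M^*$ instead of quantifying over all masks. By Lemma~\ref{lem:maximal_mask}, the maximal mask $M^*$ of $T$ with respect to $C$ can be built in $\Theta\left(WH\right)$ time (via Bird's algorithm). It then remains to test, within the same budget, the covering condition
$$\forall Y\in\overline{1,\,H},\,X\in\overline{1,\,W}\ \exists i\in\overline{Y-h+1,\,Y},\,j\in\overline{X-w+1,\,X}\ {M^*}^i_j=1.$$

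To do this I would precompute the two-dimensional prefix-sum table $P$ of the binary matrix $M^*$, that is $P^i_j=\sum_{a\le i,\,b\le j}{M^*}^a_b$, which takes $\mathcal{O}\left(WH\right)$ time using the standard recurrence $P^i_j={M^*}^i_j+P^{i-1}_j+P^i_{j-1}-P^{i-1}_{j-1}$ (with out-of-range entries read as $0$). For each cell $\left(Y,\,X\right)$, the number of marked positions inside the window $\overline{\max\left(1,\,Y-h+1\right),\,Y}\times\overline{\max\left(1,\,X-w+1\right),\,X}$ is then obtained in $\mathcal{O}\left(1\right)$ by inclusion–exclusion on four entries of $P$, and the window is non-empty iff this count is at least $1$. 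If all $WH$ windows are non-empty then $C$ covers $T$, and otherwise it does not. (Alternatively one could sweep a sliding-window maximum over $M^*$ in two passes, but the prefix-sum version is the cleanest to analyse.) Adding up, the mask construction and the $WH$ constant-time window queries both fit in $\mathcal{O}\left(WH\right)$, which together with the lower bound yields $\Theta\left(WH\right)$.

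I do not anticipate a real obstacle here; the only points requiring care are (i) the boundary windows, where $Y-h+1$ or $X-w+1$ falls below $1$ and the lower index must be clamped to $1$ — exactly what intersecting the ranges $\overline{Y-h+1,\,Y}$ and $\overline{X-w+1,\,X}$ with $\overline{1,\,H}$ and $\overline{1,\,W}$ already prescribes — and (ii) explicitly invoking the equivalence theorem so that checking the maximal mask alone is legitimate. With these in place, the correctness of Algorithm~\ref{alg:recognition} is just a restatement of Definition~\ref{def:strong} in terms of the prefix-sum query.
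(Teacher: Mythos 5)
Your proposal is correct, and it matches the paper's overall skeleton (lower bound from reading $T$, reduction to the strong definition via the equivalence theorem, construction of the maximal mask $M^*$ in $\Theta\left(WH\right)$ by Lemma~\ref{lem:maximal_mask}), but it replaces the paper's second phase with a genuinely different argument. The paper checks the covering condition by a dynamic program on the ``nearest supporting occurrence'' $N\left(x,\,y\right)$, proving the recurrence $N\left(x,\,y\right)\in\lbrace N\left(x-1,\,y\right),\,N\left(x,\,y-1\right),\,\left(x,\,y\right)\rbrace$ through a case analysis on whether the minimal supports of the western and northern neighbours also support $\left(x,\,y\right)$; this is what Algorithm~\ref{alg:recognition} implements. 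You instead build the two-dimensional prefix-sum table of $M^*$ and answer each clamped $h\times w$ window query by inclusion--exclusion in $\mathcal{O}\left(1\right)$, declaring $\left(Y,\,X\right)$ covered iff the count is positive. Both phases run in $\mathcal{O}\left(WH\right)$, so the $\Theta\left(WH\right)$ bound follows either way; your clamping at the boundary is exactly the right treatment of the ranges $\overline{Y-h+1,\,Y}$ and $\overline{X-w+1,\,X}$, and entries of $M^*$ with $i>H-h+1$ or $j>W-w+1$ are $0$, so over-counting them is harmless. What each approach buys: your prefix-sum argument is shorter and its correctness is immediate (no combinatorial lemma about minimal supports is needed), while the paper's dynamic program additionally produces, for every cell, an explicit nearest witness occurrence and reports the first uncovered position during a single left-to-right, top-to-bottom sweep, integrating directly with the stage output of Bird's algorithm. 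Either argument is a valid proof of the stated theorem.
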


\begin{proof}

We can instantly disqualify images $C$ having $h > H$ or $w > W$. Otherwise, since we must at least read $T$, the decision problem is at least $\Omega\left(HW\right)$. Thus, we prove only the upper bound, $\mathcal{O}\left(WH\right)$.

By Lemma~\ref{lem:maximal_mask} we compute $M^*$ in $\mathcal{O}\left(WH\right)$ time. We now check if $M^*$ ``tiles up to'' $T$, as per Definition~\ref{def:strong}. Thus we check that every $\left(x,y\right)$ of $T$ belongs to some occurrence of $C$, whose north-west corner is located at some point in $D(x,y)$, where
$$ D\left(x,\,y\right)=\lbrace \left(x-w+1,\,y-h+1\right)\leq\left(x^\prime,\,y^\prime\right)\leq\left(x,\,y\right) \vert {M^*}^{y^\prime}_{x^\prime}=1\rbrace.$$

At this point we could simply walk through $M^*$ and check that every location is indeed covered. However since there are up to $\mathcal{O}\left(WH\right)$ occurences of $C$ in $T$ the naive aproach takes $\mathcal{O}\left(W^2H^2\right)$ time. 

For the rest of the proof we show that we can compute whether there exists some $\left(x,y\right)$ for which $D\left(x,y\right)=\emptyset$ in $\mathcal{O}\left(WH\right)$. We call points for which $D\left(x,y\right)\neq0$ admissible and points for which $D\left(x,y\right)= 0$ inadmissible. We say that the points $D\left(x,y\right)$ ``support'' the hypothesis that $\left(x,y\right)$ is admissible.

Let $\leq_{lex}$ be the lexicographical order and the function $N\left(x,\,y\right)$ be the closest (north-west corner of an) occurrence of $C$ form $\left(x,y\right)$, i.e. 
$$ N\left(x,\,y\right)=\underset{\leq_{lex}}{\arg\min}\lbrace\left(x-x^\prime,\,y-y^\prime\right)\vert \left(x^\prime,\,y^\prime\right)\in D\left(x,\,y\right)\rbrace,$$
for which, by definition, $N\left(x,\,y\right)=\left(\infty,\infty\right)$ if and only if $D\left(x,y\right)=\emptyset.$

Note that if the minimal support for the western neighbor of a point, $N\left(x-1,y\right)$, does not support it, then $\left(x,\,y\right)$ is the only point that can support itself but not its northern neighbor, $\left(x,y-1\right)$, i.e. 
\begin{align*}
    N\left(x-1,\,y\right)\not\in D\left(x,\,y\right)&\Rightarrow {M^*}^{y^\prime}_{x^\prime}=0\ \forall x^\prime \in \overline{x-w+1,\,x-1},\,y^\prime\in\overline{y-h+1,\,y}\Rightarrow\\
    &\Rightarrow D\left(x,\,y\right)\subseteq D\left(x,\,y-1\right)\cup\lbrace \left(x,\,y\right)\rbrace.
\end{align*}

Similarly, if the minimal support for the northern neighbor of a point, $N\left(x,y-1\right)$, does not support it, then $\left(x,\,y\right)$ is the only point that can support itself but not its western neighbor, $\left(x-1,\,y\right)$, i.e. 
\begin{align*}
  N\left(x,\,y-1\right)\not\in D\left(x,\,y\right)&\Rightarrow {M^*}^{y^\prime}_x=0\ \forall y^\prime\in\overline{y-h+1,\,y-1}\Rightarrow\\
  &\Rightarrow D\left(x,\,y\right)\subseteq D\left(x-1,\,y\right)\cup\lbrace \left(x,\,y\right)\rbrace.
\end{align*}

By the above, if neither minimal support for the western and northern neighbors supports $\left(x,y\right)$ then only $\left(x,\,y\right)$ may support itself, i.e.
\begin{align*}
   &N\left(x,\,y-1\right)\not\in D\left(x,\,y\right) ,\, N\left(x-1,\,y\right)\not\in D\left(x,\,y\right)\Rightarrow\\ 
   \Rightarrow&{M^*}^{y^\prime}_{x^\prime}=0\ \forall x^\prime \in \overline{x-w+1,\,x},\,y^\prime\in\overline{y-h+1,\,y},\,\left(x^\prime,\,y^\prime\right)\neq\left(x,\,y\right) \Rightarrow \\
   \Rightarrow & D\left(x,\,y\right)\subseteq\lbrace (x,\,y)\rbrace.
\end{align*}

Moreover, if $\left(x_1,\,y_1\right)\leq_{lex} \left(x_2,\,y_2\right)$ we have
\begin{align*}&\left(x_1-x_1^*,\,y_1-y_1^*\right)\leq_{lex} \left(x_1-x_1^\prime,\,y_1-y_1^\prime\right) \Leftrightarrow \\ \Leftrightarrow & \left(x_2-x^*,\,y_2-y^*\right)\leq_{lex} \left(x_2-x^\prime,\,y_2-y^\prime\right)\end{align*}
and thus, if $\left(x^\prime,\,y^\prime\right)$ supports both $\left(x,\,y\right)$ and one of its western or northern neighbors, but is not the minimal support of that neighbor, then it is not the minimal support of $\left(x,\,y\right)$. We obtain the dynamic programming scheme
$$ N\left(x,\,y\right)\in\lbrace N\left(x-1,\,y\right),\, N\left(x,\,y-1\right),\,\left(x,\,y\right)\rbrace.$$

This scheme can be implemented in $\mathcal{O}\left(WH\right)$ time as shown in Algorithm \ref{alg:recognition}. We have proven that it correctly decides whether the maximal mask does indeed cover the entire image, i.e. $C$ is a cover of $T$. We conclude that the complexity of the decision problem is indeed $\Theta\left(WH\right)$. \qed
\end{proof}

\begin{algorithm}[!ht]
\caption{Image Cover Decision}
\label{alg:recognition}
\begin{algorithmic}[1]
\Procedure {Check}{$T$, $w$, $h$}
\State Preprocess $T$ (per Bird's algorithm)
\For {$x\in\overline{1,\,H}$}
\For {$y\in\overline{1,\,W}$}
\State $N\left(x,\,y\right)=\left(-\infty,\,-\infty\right)$
\If {$x > 1$ and $\left(x-w+1,\,y-h+1\right)\leq N\left(x-1,\,y\right)$}
\State $N\left(x,\,y\right)=N\left(x-1,\,y\right)$
\EndIf
\If {$y > 1$ and $\left(x-w+1,\,y-h+1\right)\leq N\left(x,\,y-1\right)$}
\If {$\left(x,\,y\right)-N\left(x,\,y-1\right)\leq_{lex} N\left(x,\,y\right)-N\left(x,\,y-1\right)$}
\State $N\left(x,\,y\right)=N\left(x,\,y-1\right)$
\EndIf
\EndIf
\If {$\textmd{stage}\left(y,\,x\right)$ (per Bird's algorithm)}
\State $N\left(x,\,y\right)=\left(x,\,y\right)$
\EndIf
\If {$N\left(x,\,y\right)=\left(-\infty,\,-\infty\right)$}
\State \Return Mismatch: $\left(x,\,y\right)$
\EndIf
\EndFor
\EndFor
\State \Return Match
\EndProcedure
\end{algorithmic}
\end{algorithm}

\section{Minimal Image Covers}
\label{sec:minimal_image_covers}

Among the family of covers of an image $T$, our goal is to find a ``minimal'' one. To achieve this goal we have to define the optimization criterion. This criterion takes the form of an evaluation function:
$ eval\,:\,\overline{1,\,W}\times\overline{1,\,H}\rightarrow \bar{\mathbb{R}}$.

\begin{proposition}Obtaining the minimal image cover $C$ of $T$ with respect to $eval$ takes time $\mathcal{O}\left(W^2H^2 + WH\Theta\left(eval\right) \right)$.
\end{proposition}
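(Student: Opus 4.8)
The plan is to reduce the problem to an exhaustive search over a $\Theta\left(WH\right)$-size family of candidate covers, each tested with the decision procedure of the previous theorem.

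First I would recall (as already noted in the text, ``a cover $C\in\mathrm{Mat}_{h,\,w}\left(\Sigma\right)$ can be identified with the $\left(h,\,w\right)$ pair'') why the candidate set is so small. By Definition~\ref{def:strong}, the element $\left(1,1\right)$ of $T$ must lie inside some occurrence of $C$ whose north-west corner $\left(i,j\right)$ satisfies $i\le 1$ and $j\le 1$; the only such corner is $\left(1,1\right)$ itself, so necessarily $C$ is the top-left $h\times w$ submatrix of $T$. Hence a cover is completely determined by its dimensions, and there are exactly $WH$ pairs $\left(h,w\right)\in\overline{1,\,H}\times\overline{1,\,W}$ to consider — consistent with the fact that Algorithm~\ref{alg:recognition} already receives only $\left(w,h\right)$ as input, the pattern being the corresponding prefix submatrix handed to Bird's preprocessing.

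The algorithm then iterates over all these $WH$ pairs. For each $\left(h,w\right)$ it runs \textsc{Check}$\left(T,w,h\right)$ of Algorithm~\ref{alg:recognition}, which by the Image Cover Decision theorem decides in $\Theta\left(WH\right)$ time whether the $h\times w$ prefix submatrix covers $T$ (the $\mathcal{O}\left(wh\right)$ pattern preprocessing is absorbed into $\mathcal{O}\left(WH\right)$). Whenever a match is reported it evaluates $eval$ on that pair in $\Theta\left(eval\right)$ time and keeps the pair of minimum value seen so far; initialising with $\left(H,W\right)$, which is always a cover, makes the answer well defined. Summing costs gives $\sum_{h\le H,\,w\le W}\bigl(\Theta\left(WH\right)+\Theta\left(eval\right)\bigr)=\Theta\left(W^2H^2\right)+\mathcal{O}\left(WH\,\Theta\left(eval\right)\right)$, i.e.\ $\mathcal{O}\left(W^2H^2+WH\,\Theta\left(eval\right)\right)$, as claimed. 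Correctness is immediate: every candidate is examined, the non-covers are discarded, and a covering pair of minimum $eval$ value is returned.

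Since this is plain exhaustion there is no genuine obstacle; the only points deserving care are the two bookkeeping observations that pin the bound down, namely that the candidate covers are precisely the $WH$ prefix submatrices (without which one might fear $\lvert\Sigma\rvert^{wh}$ choices of content) and that $eval$ is invoked only on the pairs that survive the cover test, which is what keeps its cost multiplied by $WH$ rather than by $W^2H^2$. One may additionally remark that the extended-real codomain $\bar{\mathbb{R}}$ permits assigning $+\infty$ to non-covers and taking a single global minimum, but this is cosmetic and does not affect the complexity.
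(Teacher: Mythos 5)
Your proposal is correct and follows essentially the same route as the paper: brute-force enumeration of all $\Theta\left(WH\right)$ candidate $\left(w,h\right)$ pairs, each tested with the $\Theta\left(WH\right)$ decision algorithm, with $eval$ invoked only on the pairs that turn out to be covers. Your explicit justification that a cover must be the top-left prefix submatrix (so candidates are exactly the dimension pairs) is a welcome detail the paper only states informally before the proposition; the paper's proof additionally remarks that the bound is tight for arbitrary $eval$, which the statement itself does not require.
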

\begin{proof}
A brute force approach checks all possible $\left(w,h\right)$ pairs (which are $\Theta\left(WH\right)$) and uses the decision algorithm above. If a cover is found it is evaluated. This yields complexity $\mathcal{O}\left(W^2H^2 + WH\Theta\left(eval\right) \right)$.

Moreover, if $eval$ is arbitrary all $\left(w,h\right)$ pairs must be checked since $eval\left(.\right)$ can be unbounded (or some large finite value) for all $\left(w,h\right)$ except $\left(w^*,h^*\right)$ which shows the bound is tight.
\qed\end{proof}

\begin{proposition}If the minimal $C$ is the $n$-th candidate according to the order induced by $eval$, we can obtain $C$ in $\mathcal{O}\left(nWH\right)$ if the input is already sorted according to this order.
\end{proposition}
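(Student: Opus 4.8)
The plan is a linear scan over the pre-sorted candidate list. Write the candidates as $\left(w_1,h_1\right),\left(w_2,h_2\right),\dots$, sorted so that $eval\left(w_1,h_1\right)\le eval\left(w_2,h_2\right)\le\cdots$. Since by our convention a cover is identified with its dimension pair and the candidate set $\overline{1,\,W}\times\overline{1,\,H}$ is finite, a global minimizer of $eval$ among the covers of $T$ is exactly the first pair in this list that actually covers $T$. So I would iterate $k=1,2,\dots$ and, for each $k$, invoke the Image Cover Decision procedure (Algorithm~\ref{alg:recognition}) on $T$ with the candidate $\left(w_k,h_k\right)$ — discarding in $\mathcal{O}\left(1\right)$ any out-of-range pair with $h_k>H$ or $w_k>W$; the first $k$ for which it returns \emph{Match} is the answer.

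For the running time, the Image Cover Decision theorem gives that each such test costs $\Theta\left(WH\right)$ (the per-pattern preprocessing of Bird's algorithm is $\mathcal{O}\left(w_kh_k\right)=\mathcal{O}\left(WH\right)$, and the scan of $T$ is $\mathcal{O}\left(WH\right)$), a bound independent of the particular candidate dimensions. By hypothesis the minimal cover is the $n$-th candidate in $eval$-order, so the loop performs at most $n$ tests before halting, whence the total time is $\mathcal{O}\left(nWH\right)$. Because the candidate list is supplied already sorted, the algorithm itself never evaluates $eval$, so no $\Theta\left(eval\right)$ factor appears in this bound; this matches the discussion preceding the statement.

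The only point requiring care is the correctness argument — that stopping at the first covering candidate in $eval$-order returns a true minimizer. This is immediate once one observes that the order induced by $eval$ is a total preorder on the finite candidate set, so the first covering candidate in that order is $\le$-minimal among all covering candidates; if several candidates tie for the minimal $eval$ value, any of them is an admissible answer and the scan simply returns whichever the given sorting lists first. I expect no real obstacle here: the substance is precisely that the decision procedure's cost does not grow with the size of the candidate, which we already have from the preceding theorem.
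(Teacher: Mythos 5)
Your proposal is correct and takes essentially the same route as the paper: both scan the candidates in $eval$-sorted order, run the $\Theta\left(WH\right)$ Image Cover Decision procedure on each, and stop at the first cover, giving $\mathcal{O}\left(nWH\right)$. Your added remark that the pre-sorted input removes any $\Theta\left(eval\right)$ or sorting term just makes explicit what the paper folds into its $\mathcal{O}\left(WH\Theta\left(eval\right)+WH\log\left(WH\right)+nWH\right)$ bound for the unsorted case.
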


\begin{proof}
If the ordering induced by the $eval$ function is known, we queue up the would-be covers in that order (by sorting for example). For instance, if the $n$-th candidate is the first cover encountered, the runtime of the minimization algorithm described above is $\mathcal{O}\left(WH\Theta(eval) + nWH\right)$. This can be achieved via sorting, yielding a complexity of
 $$\mathcal{O}(WH\Theta(eval) + WH\log(WH) + nWH)$$
 \qed\end{proof}

\subsection{The Size Criteria}

We now study minimality with respect to a natural criterion, namely the size, as given by the area, $\ell_1$ norm  and $\ell_\infty$ norm. 

\noindent For the area, the evaluation function is defined as
$$ \overline{1,\,W}\times\overline{1,\,H}\ni\left(w,\,h\right)\rightarrow eval\left(w,\,h\right)=wh\in\mathbb{R}.$$

Suppose we knew that $wh\leq w_0h_0$. Then we have
$$h\in\overline{1,\,\min\left(\lfloor w_0h_0/w\rfloor,\,H\right)},$$
and thus $\left(w,h\right)$ is one of the lattice points of the intersection of the rectangle $\left(\left(1,\,1\right),\,\left(1,\,H\right),\,\left(W,\,1\right),\,\left(W,\,H\right) \right)$ with the triangle $\left(\left(1,\,1\right),\,\left(1,\,w_0h_0\right),\,\left(w_0h_0,\,1\right)\right)$. 

These contain at most $WH$ and $w_0^2h_0^2/2$ lattice points respectively. Thus, the optimal $\left(w,h\right)$ pair is found after at most  $n\approx \mathcal{O}\left(\min\left(w_0^2h_0^2,\,WH\right)\right)$ attempts which leads to an upper bound of $\mathcal{O}\left(\min\left(w_0^2h_0^2,\,WH\right)WH\right)$. 

\vspace*{1em}

\noindent For the $\ell_1$ norm the evaluation function is
$$ \overline{1,\,W}\times\overline{1,\,H}\ni\left(w,\,h\right)\rightarrow eval\left(w,\,h\right)=w+h\in\mathbb{R}.$$

Suppose we knew that $w+h\leq w_0+h_0$. Then we have 
$$h\in\overline{1,\,\min\left(w_0+h_0-w,\,H\right)},$$
and thus $\left(w,h\right)$ is one of the lattice points of the intersection of the rectangle $\left(\left(1,\,1\right),\,\left(1,\,H\right),\,\left(W,\,1\right),\,\left(W,\,H\right)\right)$ with the triangle $\left(\left(1,\,1\right),\,\left(1,\,w_0+h_0-1\right),\,\left(w_0+h_0-1,\,1\right)\right)$. 

These contain at most $WH$ and $\left(w_0+h_0\right)^2/2$ lattice points respectively. Thus, the optimal $\left(w,h\right)$ pair is found after at most $n\approx \mathcal{O}\left(\min\left(w_0^2+h_0^2,\,WH\right)\right)$ attempts which leads to an upper bound of $\mathcal{O}\left(\min\left(w_0^2+h_0^2,\,WH\right)WH\right)$. 

\vspace*{1em}

\noindent For the $\ell_\infty$ norm the evaluation function is defined as
$$ \overline{1,\,W}\times\overline{1,\,H}\ni\left(w,\,h\right)\rightarrow eval\left(w,\,h\right)=\max\left(w,\,h\right)\in\mathbb{R}.$$

Suppose we knew that $\max\left(w,\,h\right)\leq \max\left(w_0,\,h_0\right)$. Then we have $$h\in\overline{1,\,\min\left(\max\left(w_0,h_0\right),\,H\right)},$$
and thus $\left(w,h\right)$ is one of the lattice points of the intersection of the rectangle $\left(\left(1,\,1\right),\,\left(1,\,H\right),\,\left(W,\,1\right),\,\left(W,\,H\right)\right)$ with the square $\left(\left(1,\,1\right),\,\left(1,\,\max\left(w_0,h_0\right),\right),\,\left(\max\left(w_0,\,h_0\right),\,1\right),\,\left(\max\left(w_0,\,h_0\right),\,\max\left(w_0,\,h_0\right)\right)\right)$.

These contain at most $WH$ and $\max\left(w_0,\,h_0\right)^2$ lattice points respectively. Thus, the optimal $\left(w,h\right)$ pair is found after at most $n\approx \mathcal{O}\left(\min\left(\max\left(w_0,\,h_0\right)^2,\,WH\right)\right)$ attempts which leads to an upper bound of $\mathcal{O}\left(\min\left(\max\left(w,h\right)^2,\,WH\right)WH\right)$.

Note that we never used $w_0$ or $h_0$ other than for the calculation of the algorithm runtime. Thus, these calculations remain valid even if we do not know anything about $w_0$ and $h_0$. Their value is automatically substituted for the width and height of the minimal cover.

\subsection{Boosting Average Performance by Preprocessing}

In many cases, we do not have to verify all candidates. For instance, if the candidate a $\left(w,\,h\right)$ is a cover, then the first and the last $w$ columns and $h$ rows are image-covered by $T^{\overline{1,\,w}}_{\overline{1,\,h}}$. Based on this criterion we construct a preprocessing routine.

Suppose we knew that $h\geq h_0$. This means that $C$ is at least $h_0$-tall. Hence, $T^{\overline{1,\,W}}_{\overline{1,\,h_0}}$ covers $T^{\overline{1,\,W}}_{\overline{1,\,H}}$. Note that ${M^*}^i_j=0\forall j\geq 2$ since there is not enough space to accomodate another tile horizontally. Consequently, $T^{\overline{1,w}}_i$ covers $T^{\overline{1,W}}_i$ for all $i\leq h_0$ or $i\geq H-h_0$. Thus if $C_i$ are all the covers of $T^{\overline{1,W}}_i$ then 
$$ w\geq \min 
\left\{\left|c\right|\mid c \in C_i, \forall i\in\overline{1,h_0}\cup\overline{H-h_0,H} \right\}$$ 

This bound can be calculated in $\mathcal{O}\left(Wh_0\right)$. Since $C$ is a cover of $T$ if and only if $C^T$ is a cover of $T^T$, if we knew that $w_C\geq w_0$ then $h_{C^\prime}\geq w_0$ and hence we can similarly obtain a lower bound for $w_{C^\prime}=h_C$ in $\mathcal{O}\left(Hw_0\right)$. 

Suppose we knew that $w\geq w_0$ and $h\geq h_0$. It takes $\mathcal{O}\left(Wh_0+w_0H\right)$ to check that this first test does not already disprove the eligibility of $\left(w_0,h_0\right)$. Notably, the covers of $T^{\overline{1,W}}_i$ and $T^i_{\overline{1,H}}$ can be pre-computed (or cached) such that the cumulative preprocessing time is $\mathcal{O}\left(WH\right)$, which is essentially free.

Since we have established that this preprocessing is effectively free we can do it entirely \emph{a priori}, i.e. obtain the transitive closure of the preprocessing function. Let $S$ be the matrix of string covers returned by the optimal Minimal String Cover algorithm for each line and $S^\prime$ for columns, i.e. $S^i_j=1$ if the first $j$ characters on the $i$-th line cover the $i$-th line and ${S^\prime}^i_j=1$ if the first $i$ characters on the $j$-th column cover the $j$-th column. The current preprocessing is equivalent to computing the Hadamard product of the matrices
\begin{align*}
{S_1}^i_j &=\min\left(S^i_j,\,{S_1}^{i-1}_j\right) \\
{S^\prime_1}^i_j &=\min\left({S^\prime}^i_j,\,{S^\prime_1}^{i}_{j-1}\right) \\
S^* &=\min\left(S_1,\,S^\prime_1\right)=S_1\odot S^\prime_1.
\end{align*}

Notably, the number of elements that are not pruned is the number of non-zero elements of $S^*$. However 

\begin{align*}
{S_1}^i_j &= \underset{i^\prime=1}{\overset{i}\prod} {S}^{i^\prime}_j \\
{S^\prime_1}^i_j &= \underset{j^\prime=1}{\overset{j}\prod} {S^\prime}^{i}_{j^\prime} \\
{S^*}^i_j &= {S_1}^i_j{S^\prime}^i_j=\underset{i^\prime=1}{\overset{i}\prod}\underset{j^\prime=1}{\overset{j}\prod}{S}^{i^\prime}_j{S^\prime}^{i}_{j^\prime}
\end{align*}

We now check the effectiveness of our preprocesing.

\begin{proposition}
Computing the matrix $S^*$ reduces the number of candidates that need to be checked to $\Theta\left(1\right)$ average time for arbitrary $H$ and $\Theta\left(1\right)$ worst-case for $H=1$.
\end{proposition}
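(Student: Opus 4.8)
The plan is to analyze the number of surviving candidates via the structure of $S^*$, whose $(i,j)$-th entry is $1$ exactly when the first $j$ characters of every one of the first $i$ rows cover that row \emph{and} the first $i$ characters of every one of the first $j$ columns cover that column (using the product formulas for $S_1$, $S^\prime_1$, $S^*$ derived just above). A pair $(w,h)$ survives preprocessing only if ${S^*}^h_w=1$ (and, by the symmetric argument applied to the last rows/columns, the analogous condition at the far corner); so it suffices to bound the expected number of $1$-entries in $S^*$. First I would handle the $H=1$ worst case: there $S^*$ reduces to $S_1 = S$, a single row, and ${S}^1_j=1$ iff the length-$j$ prefix covers the string. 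By the classical theory of string covers (Apostolico--Farach--Iliopoulos), the set of cover lengths of a string of length $W$ forms a chain under the "border" relation and has size $O(\log W)$ in the worst case — but more to the point, the minimal string cover is found directly, so only $O(1)$ candidate widths need be probed once the covers are listed in increasing length; hence $\Theta(1)$ candidates in the worst case for $H=1$.

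For the average case with arbitrary $H$, I would put a uniform-random model on $T \in \Sigma^{H\times W}$ with $|\Sigma|\ge 2$ and estimate $\mathbb{E}\big[\#\{(i,j): {S^*}^i_j=1\}\big] = \sum_{i,j}\Pr[{S^*}^i_j=1]$. The key step is bounding $\Pr[{S}^{i}_j=1]$, the probability that the length-$j$ prefix of a random length-$W$ string is a cover of it. A prefix of length $j<W$ being a cover forces, at the very least, a strong periodicity-type constraint: e.g. the prefix must be a border of the whole string, which already has probability $\le |\Sigma|^{-j}$, and covering imposes still more equalities, so $\Pr[{S}^{i}_j=1] \le |\Sigma|^{-(W-j)}$ for $j<W$ (with the $j=W$ term contributing $1$). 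Because ${S^*}^i_j = {S_1}^i_j\,{S^\prime_1}^i_j$ is a product over \emph{all} of the first $i$ rows and \emph{all} of the first $j$ columns, and these involve disjoint blocks of independent letters for the "tail" portions, one gets a bound like $\Pr[{S^*}^i_j=1]\le |\Sigma|^{-i(W-j)}\cdot|\Sigma|^{-j(H-i)}$ for $(i,j)\ne(H,W)$. Summing this geometric-type series over $i\in\overline{1,H}$, $j\in\overline{1,W}$ gives $O(1)$, with the dominant contribution from the corner $(H,W)$ itself, which always survives and is exactly the sought minimal cover; hence the expected number of candidates to check is $\Theta(1)$.

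The main obstacle I anticipate is making the probability bound on "prefix is a cover of a random string" fully rigorous and tight enough: one must be careful that the events defining ${S_1}^i_j$ across different rows are genuinely independent (they are, since distinct rows of $T$ are independent), that the row-cover events and column-cover events can be combined without double-counting the shared prefix block $T^{\overline{1,i}}_{\overline{1,j}}$ (they can, because the decisive constraints lie in the tails $T^{\overline{1,i}}_{\overline{j+1,W}}$ and $T^{\overline{i+1,H}}_{\overline{1,j}}$, which are disjoint and independent), and that the string-cover probability estimate $O(|\Sigma|^{-(W-j)})$ is legitimate — for this I would invoke that a cover is in particular a border, so the last $W-j$ letters are forced to equal a fixed length-$(W-j)$ string, giving the bound immediately, and note that covers of length $\ge W/2$ are even rarer. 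Once these independence and border observations are in place, the summation is routine and yields the claimed $\Theta(1)$ bounds; I would also remark, as the paper does, that even without the random model this preprocessing never hurts, since in the worst case $S^*$ has $O(WH)$ ones and the preprocessing cost is itself $O(WH)$.
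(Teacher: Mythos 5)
Your overall strategy (a concrete uniform-random model on $T$, a per-row probability bound for ``prefix of length $j$ covers the row'', independence across rows/columns, then a geometric summation) is a genuinely different and in principle more rigorous route than the paper's, which postulates Bernoulli-$p$ entries of $S$ and $S'$ directly and invokes an Euler-type estimate $p^{i/\log i}$ before summing. However, your central estimate is wrong, both in its statement and in its justification. A cover of length $j$ is in particular a border, but a border of length $j$ forces the \emph{last $j$} letters to equal the first $j$ letters, which gives $\Pr\left[S^i_j=1\right]\le\left|\Sigma\right|^{-j}$ for $j<W$ (and, conditioned on the prefix, $\left|\Sigma\right|^{-\min\left(j,\,W-j\right)}$), not $\left|\Sigma\right|^{-\left(W-j\right)}$. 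Moreover the tail is \emph{not} ``forced to equal a fixed length-$(W-j)$ string'': $aba$ covers both $abaababa$ and $ababaaba$, and in general the number of length-$W$ binary strings covered by $aba$ equals the number of compositions of $W-3$ into parts $2$ and $3$, which grows like $\rho^{W}$ with $\rho\approx 1.3247$. Hence $\Pr\left[S^1_3=1\right]\ge 2^{-W}\cdot c\rho^{W}\gg 2^{-\left(W-3\right)}$ for large $W$, so your inequality is false for small $j$ over a binary alphabet, and the two-dimensional product bound $\left|\Sigma\right|^{-i\left(W-j\right)}\left|\Sigma\right|^{-j\left(H-i\right)}$ built on it collapses. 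The independence step is also not as clean as claimed: for $j\ge W/2$ the border constraint reaches into the shared prefix block, and the column events constrain the whole left block $T^{\overline{1,H}}_{\overline{1,j}}$, not only the bottom tail.

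The argument is repairable along your own lines, and the repair also removes the independence worry: for $j<W$ use only the row borders, which live in disjoint rows, giving $\Pr\left[{S_1}^i_j=1\right]\le\left|\Sigma\right|^{-ij}$ and $\sum_{i=1}^{H}\sum_{j=1}^{W-1}\left|\Sigma\right|^{-ij}=O(1)$; for the boundary entries $j=W$, $i<H$ use the column borders (disjoint columns), giving $\left|\Sigma\right|^{-iW}$, whose sum is $o(1)$; the corner $(H,W)$ contributes the unavoidable $1$. There is no need to multiply row and column events for the same entry. Separately, your claim that the number of covers of a length-$W$ string is $O\left(\log W\right)$ in the worst case is false: every prefix of $a^{W}$ is a cover, so there can be $\Theta\left(W\right)$ surviving widths even for $H=1$; the $\Theta(1)$ worst-case statement for $H=1$ holds for the reason you give afterwards, namely that the smallest surviving candidate is exactly the minimal string cover of the single row, so only one candidate ever needs to be verified, not because $S^*$ has few ones.
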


\begin{proof}
Assume that there is a $p$ probability for any tile in $S^i_j$ and ${S^\prime}^i_j$ to be 1, and even the additional condition that $S^{mi}_j\geq S^{i}_j$ for all $m$ and assuming that there is no single-character line nor column. Then by the Euler approximation, the probability that ${S_1}^i_j$ be 1 is $p^{{i}/\log\left(i\right)}$, that ${S^\prime}^i_j$ be 1 is $p^{{j}/\log\left(j\right)}$ and thus the probability that ${S^*}^i_j$ be 1 is $p^{i/\log\left(i\right)+j/\log\left(j\right)}$. Thus the expected number of 1s, considering that $S^i_W=S^H_j=1$, is
$$\left(1+\underset{i=2}{\overset{H}\sum}p^{i/\log\left(i\right)}\right)\left(1+\underset{j=2}{\overset{W}\sum}p^{j/\log\left(j\right)}\right)\leq 
\left(1+\frac{p}{1-p}\right)^2=\frac{1}{\left(1-p\right)^2}.$$
\end{proof}

We conclude that there exists a solution that is linear on the average case, $\mathcal{O}\left(WH\right)$ and quadratic in the worst, with the output-sensitive complexity: $\mathcal{O}\left(whWH\right)$, but which reduces to $\mathcal{O}\left(W\right)$ for the 1-dimensional case.

\section{A Connection with Lattices}
\label{sec:lattices}
A lattice~\cite{wlodawer2013protein} is an additive subgroup $\mathcal{L}$ of $\mathbb{R}^n$ isomorphic to $\mathbb{Z}^n$. By definition, it is infinite and yet it is generated by $n$ elements. Consider the isomorphism $\phi:\mathbb{Z}^n\rightarrow \mathcal{L}$. The projection of the unit volume $\lbrace0,\,1\rbrace^n$ through this isomorphism $\phi\left(\lbrace0,\,1\rbrace^n\right)$ is called the primitive cell of the lattice and it can be tiled by translations to form the entire $\mathcal{L}$. Note that by isomorphism we have: 
$$\phi\left(\underset{i=1}{\overset{n}\sum} \lambda_i\mathbf{e}_i\right)=\underset{i=1}{\overset{n}\sum}\lambda_i\phi\left(\mathbf{e}_i\right)$$

Moreover, if $\mathcal{L}$ is a lattice, $R$ is a rotation and $S$ is a scaling matrix i.e. $S_i^j=0\Leftrightarrow i\neq j$ then $SR\mathcal{L}$ is isomorphic to $\mathcal{L}$ and thus when classifying lattices we can assume that there exists some $\phi\left(\mathbf{e}_i\right)=\mathbf{e}_1$. Moreover since $\mathbb{Z}^n$ is isomorphic to itself by the maps $\mathbf{e}_i\rightarrow \mathbf{e}_{\sigma\left(i\right)}$ for any permutation $\sigma\in S_n$, we can assume that $\phi\left(\mathbf{e}_1\right)=\mathbf{e}_1$. Thus, all 2-dimensional latices can be characterized by the relative phase and length of the second vector. 

\begin{figure}[!ht]
\noindent\begin{minipage}{\textwidth}
\begin{minipage}[c][3cm][c]{\dimexpr0.33\textwidth-5pt\relax}
\resizebox{\textwidth}{!}{
\begin{tikzpicture}
\filldraw [color={rgb,255:red,220; green,220; blue,255}, ] (1.5, 1.5) rectangle (2.5,2.5);
\foreach \i in {1,...,5}
	\foreach \j in {1,...,5} {
		\node at (\i,\j) {\textbullet};
		\node at (\i,\j+1) {\textbullet};
		\node at (\i+1,\j+1) {\textbullet};
		\node at (\i+1,\j) {\textbullet};
        \draw (\i,\j) -- (\i,\j+1);
        \draw (\i+1,\j+1) -- (\i,\j+1);
        \draw (\i+1,\j+1) -- (\i+1,\j);
        \draw (\i,\j) -- (\i+1,\j);
    }
\node[color=blue] at (1.5, 1.5) {\textbullet};
\draw[->,color=blue,very thick] (1.5, 1.5) -- (2.5, 1.5);
\draw[->,color=blue,very thick] (1.5, 1.5) -- (1.5, 2.5);
\draw[color=blue,very thick,dotted] (2.5,1.5) -- (2.5,2.5);
\draw[color=blue,very thick,dotted] (1.5,2.5) -- (2.5,2.5);
\end{tikzpicture}
}
\end{minipage}\hfill
\begin{minipage}[c][3cm][c]{\dimexpr0.33\textwidth-5pt\relax}
\resizebox{\textwidth}{!}{
\begin{tikzpicture}
\filldraw [color={rgb,255:red,220; green,220; blue,255}, ] ({sqrt(3)},2) -- ({2*sqrt(3)},2) -- ({2.5*sqrt(3)},3.5) -- ({1.5*sqrt(3)},3.5) -- ({sqrt(3)},2);
\node at ({sqrt(3)/2},-0.5) {\textbullet};
\draw ({sqrt(3)/2},-0.5) -- ({sqrt(3)},0);
\node at ({sqrt(3)*5.5},-0.5) {\textbullet};
\draw ({sqrt(3)*5.5},-0.5) -- ({sqrt(3)*5},0);

\node at ({sqrt(3)/2},7.5) {\textbullet};
\draw ({sqrt(3)/2},7.5) -- ({sqrt(3)},7);
\node at ({sqrt(3)*5.5},7.5) {\textbullet};
\draw ({sqrt(3)*5.5},7.5) -- ({sqrt(3)*5},7);

\foreach \r in {0,...,1}{
    \def \j {3*\r};
	\foreach \k in {1,...,4} 
    {
        \def \i {\k * sqrt(3)};
		\node at ({\i},\j) {\textbullet};
		\node at ({\i},\j+1) {\textbullet};
		\node at ({\i+sqrt(3)/2},\j+1.5) {\textbullet};
		\node at ({\i+sqrt(3)},\j+1) {\textbullet};
		\node at ({\i+sqrt(3)},\j) {\textbullet};
		\node at ({\i+sqrt(3)/2},\j-0.5) {\textbullet};
        \draw ({\i},\j) -- ({\i},\j+1);
        \draw ({\i+sqrt(3)/2},\j+1.5) -- ({\i},\j+1);
        \draw ({\i+sqrt(3)/2},\j+1.5) -- ({\i+sqrt(3)},\j+1);
        \draw ({\i+sqrt(3)},\j) -- ({\i+sqrt(3)},\j+1);
        \draw ({\i+sqrt(3)},\j) -- ({\i+sqrt(3)/2},\j-0.5);
        \draw ({\i},\j) -- ({\i+sqrt(3)/2},\j-0.5);
    };
    \def \j {3*\r+1.5};
	\foreach \k in {1,...,5} 
    {
        \def \i {\k * sqrt(3) - sqrt(3)/2};
		\node at ({\i},\j) {\textbullet};
		\node at ({\i},\j+1) {\textbullet};
		\node at ({\i+sqrt(3)/2},\j+1.5) {\textbullet};
		\node at ({\i+sqrt(3)},\j+1) {\textbullet};
		\node at ({\i+sqrt(3)},\j) {\textbullet};
		\node at ({\i+sqrt(3)/2},\j-0.5) {\textbullet};
        \draw ({\i},\j) -- ({\i},\j+1);
        \draw ({\i+sqrt(3)/2},\j+1.5) -- ({\i},\j+1);
        \draw ({\i+sqrt(3)/2},\j+1.5) -- ({\i+sqrt(3)},\j+1);
        \draw ({\i+sqrt(3)},\j) -- ({\i+sqrt(3)},\j+1);
        \draw ({\i+sqrt(3)},\j) -- ({\i+sqrt(3)/2},\j-0.5);
        \draw ({\i},\j) -- ({\i+sqrt(3)/2},\j-0.5);
    }
}

\foreach \r in {2}{
    \def \j {3*\r};
	\foreach \k in {1,...,4} 
    {
        \def \i {\k * sqrt(3)};
		\node at ({\i},\j) {\textbullet};
		\node at ({\i},\j+1) {\textbullet};
		\node at ({\i+sqrt(3)/2},\j+1.5) {\textbullet};
		\node at ({\i+sqrt(3)},\j+1) {\textbullet};
		\node at ({\i+sqrt(3)},\j) {\textbullet};
		\node at ({\i+sqrt(3)/2},\j-0.5) {\textbullet};
        \draw ({\i},\j) -- ({\i},\j+1);
        \draw ({\i+sqrt(3)/2},\j+1.5) -- ({\i},\j+1);
        \draw ({\i+sqrt(3)/2},\j+1.5) -- ({\i+sqrt(3)},\j+1);
        \draw ({\i+sqrt(3)},\j) -- ({\i+sqrt(3)},\j+1);
        \draw ({\i+sqrt(3)},\j) -- ({\i+sqrt(3)/2},\j-0.5);
        \draw ({\i},\j) -- ({\i+sqrt(3)/2},\j-0.5);
    };
}

\node[color=blue] at ({sqrt(3)}, 2) {\textbullet};
\draw[->,color=blue,very thick] ({sqrt(3)}, 2) -- ({2*sqrt(3)}, 2);
\draw[->,color=blue,very thick] ({sqrt(3)}, 2) -- ({1.5*sqrt(3)}, 3.5);
\draw[color=blue,very thick,dotted] ({2*sqrt(3)},2) -- ({2.5*sqrt(3)},3.5);
\draw[color=blue,very thick,dotted] ({1.5*sqrt(3)},3.5) -- ({2.5*sqrt(3)},3.5);

\end{tikzpicture}
}
\end{minipage}\hfill
\begin{minipage}[c][3cm][c]{\dimexpr0.33\textwidth-5pt\relax}
\resizebox{\textwidth}{!}{
\begin{tikzpicture}
\filldraw [color={rgb,255:red,220; green,220; blue,255}, ] ({1+sqrt(2)}, {1+sqrt(2)/2}) rectangle ({2*(1+sqrt(2))}, {2+1.5*sqrt(2)});
\foreach \r in {1,...,4}{
    \def \j {\r*(1+sqrt(2))};
	\foreach \k in {1,...,4} {
        \def \i {\k * (1+sqrt(2))};
		\node at ({\i},{\j}) {\textbullet};
		\node at ({\i},{\j+1}) {\textbullet};
		\node at ({\i+sqrt(2)/2},{\j+1+sqrt(2)/2}) {\textbullet};
		\node at ({\i+1+sqrt(2)/2},{\j+1+sqrt(2)/2}) {\textbullet};
		\node at ({\i+1+sqrt(2)},{\j+1}) {\textbullet};
		\node at ({\i+1+sqrt(2)},{\j}) {\textbullet};
		\node at ({\i+1+sqrt(2)/2},{\j-sqrt(2)/2}) {\textbullet};
		\node at ({\i+sqrt(2)/2},{\j-sqrt(2)/2}) {\textbullet};
        \draw ({\i},{\j}) -- ({\i},{\j+1});
        \draw ({\i+sqrt(2)/2},{\j+1+sqrt(2)/2}) -- ({\i},{\j+1});
        \draw ({\i+sqrt(2)/2},{\j+1+sqrt(2)/2}) -- ({\i+1+sqrt(2)/2},{\j+1+sqrt(2)/2});
        \draw ({\i+1+sqrt(2)},{\j+1}) -- ({\i+1+sqrt(2)/2},{\j+1+sqrt(2)/2});
        \draw ({\i+1+sqrt(2)},{\j+1}) -- ({\i+1+sqrt(2)},{\j});
        \draw ({\i+1+sqrt(2)/2},{\j-sqrt(2)/2}) -- ({\i+1+sqrt(2)},{\j});
        \draw ({\i+1+sqrt(2)/2},{\j-sqrt(2)/2}) -- ({\i+sqrt(2)/2},{\j-sqrt(2)/2});
        \draw ({\i},{\j}) -- ({\i+sqrt(2)/2},{\j-sqrt(2)/2});
    }
	\foreach \k in {1,...,5} {
        \def \i {\k * (1+sqrt(2))};
		\node at ({\i},{\j}) {\textbullet};
		\node at ({\i-sqrt(2)/2},{\j-sqrt(2)/2}) {\textbullet};
		\node at ({\i},{\j-sqrt(2)}) {\textbullet};
		\node at ({\i+sqrt(2)/2},{\j-sqrt(2)/2}) {\textbullet};
        \draw ({\i},{\j}) -- ({\i-sqrt(2)/2},{\j-sqrt(2)/2});
        \draw ({\i},{\j-sqrt(2)}) -- ({\i-sqrt(2)/2},{\j-sqrt(2)/2});
        \draw ({\i},{\j-sqrt(2)}) -- ({\i+sqrt(2)/2},{\j-sqrt(2)/2});
        \draw ({\i},{\j}) -- ({\i+sqrt(2)/2},{\j-sqrt(2)/2});
    }
}

\foreach \r in {5,...,5}{
    \def \j {\r*(1+sqrt(2))};
	\foreach \k in {1,...,5} {
        \def \i {\k * (1+sqrt(2))};
		\node at ({\i},{\j}) {\textbullet};
		\node at ({\i-sqrt(2)/2},{\j-sqrt(2)/2}) {\textbullet};
		\node at ({\i},{\j-sqrt(2)}) {\textbullet};
		\node at ({\i+sqrt(2)/2},{\j-sqrt(2)/2}) {\textbullet};
        \draw ({\i},{\j}) -- ({\i-sqrt(2)/2},{\j-sqrt(2)/2});
        \draw ({\i},{\j-sqrt(2)}) -- ({\i-sqrt(2)/2},{\j-sqrt(2)/2});
        \draw ({\i},{\j-sqrt(2)}) -- ({\i+sqrt(2)/2},{\j-sqrt(2)/2});
        \draw ({\i},{\j}) -- ({\i+sqrt(2)/2},{\j-sqrt(2)/2});
    }
}

\node[color=blue] at ({1+sqrt(2)}, {1+sqrt(2)/2}) {\textbullet};
\draw[->,color=blue,very thick] ({1+sqrt(2)}, {1+sqrt(2)/2}) -- ({2*(1+sqrt(2))}, {1+sqrt(2)/2});
\draw[->,color=blue,very thick] ({1+sqrt(2)}, {1+sqrt(2)/2}) -- ({1+sqrt(2)}, {2+1.5*sqrt(2)});
\draw[color=blue,very thick,dotted] ({2*(1+sqrt(2))},{1+sqrt(2)/2}) -- ({2*(1+sqrt(2))},{2+1.5*sqrt(2)});
\draw[color=blue,very thick,dotted] ({1+sqrt(2)},{2+1.5*sqrt(2)}) -- ({2*(1+sqrt(2))},{2+1.5*sqrt(2)});
\end{tikzpicture}
}
\end{minipage}

\begin{minipage}[c][1cm][t]{\dimexpr0.33\textwidth-5pt\relax}
\captionof{figure}{a grid lattice}
\end{minipage}\hfill
\begin{minipage}[c][1cm][t]{\dimexpr0.33\textwidth-5pt\relax}
\captionof{figure}{a hexagonal row lattice}
\end{minipage}\hfill
\begin{minipage}[c][1cm][t]{\dimexpr0.33\textwidth-5pt\relax}
\captionof{figure}{a mixed tiling which is actually a grid lattice}
\end{minipage}

\end{minipage}
\end{figure}

Given a volume in $n$-dimensional space and a lattice $\mathcal{L}\subseteq\mathbb{E}^n$, we can divide it according to the lattice i.e. given $\phi:\mathbb{Z}^n\rightarrow\mathcal{L}$ we have 
$$\mathbb{E}^n_\mathcal{L}=\lbrace C_\mathbf{l} = \overline{conv}\left(\lbrace \phi\left(\mathbf{l}+\mathbf{v}\right)\vert \mathbf{v}\in\lbrace 0,\,1\rbrace^n \rbrace\right) \vert \mathbf{l}\in\mathbb{Z}^n \rbrace$$

Note that the translation $\phi\left(\mathbf{l}\right)\rightarrow\phi\left(\mathbf{l}^\prime\right)$ maps $C_\mathbf{l}$ to $C_{\mathbf{l}^\prime}$ and thus the volume of any two cells is the same for a given $\mathcal{L}\subseteq\mathbb{E}^n$. Thus we can define the quantity $vol\left(\mathcal{L}\right) = vol\left(C_\mathbf{0}\right)$ to be the unit volume of a lattice $\mathcal{L}$.

Given some volumetric data $\mathbb{E}^n\supseteq \mathbf{V}\ni\mathbf{x}\rightarrow \psi\left(\mathbf{x}\right)\in\mathbb{R}$ we say that a lattice $\mathcal{L}$ is legal with respect to $\psi$ if $\psi$ is also translation invariant i.e.
$$ \psi\left(\phi\left(\mathbf{x}\right)\right)=\psi\left(\phi\left(\mathbf{x+v}\right)\right)\forall\mathbf{x}\in \mathbb{Z},\,\mathbf{v}\in\lbrace 0,\,1\rbrace^n,\,\phi\left(\mathbf{x}\right)\in \mathbf{V},\,\mathbf{x+v}\in\mathbf{V}$$

Moreover, $\mathcal{L}$ is natural with respect to $\psi$ if it is a legal lattice, minimal with respect to the unit volume. 

We would like to obtain the unit cell of the natural lattice given \emph{not the lattice points but instead a tiling of the unit cell that is cropped to a $W$-long, $H$-tall image that contains at least one copy of the unit cell} i.e. volumetric 2-dimensional data. 

Once we have found an unit cell, any translation or rotation of it is still an unit cell which describes the same geometry and thus we have no interest in selecting any particular one. We accept any unit cell of any natural lattice.

Since a legal lattice is invariant to translations, we may always fix the origin of one unit cell on $T_1^1$. Since it is invariant to rotations we may always fix that one of its unit vectors is along the $T^1$ row. However, it may be that the other axis is not along the $T_1$ column, as is the case for hexagonal lattices. Moreover, it may be that our image does not end after an integer number of tiles, but instead a fractional one. In this case, the end fraction has to appear in the cover. We conclude that the shortest cover may never contain more than the volume of the box-cover of 4 unit tiles. In fact, it never contains 2 entire unit tiles on any side. Moreover, it will always contain at least one unit tile or a seed of it. 

Note that this approach is especially interesting in the case of quasi-periodic crystals (which do not admit a Bravais lattice)~\cite{wlodawer2013protein}. This extends the $k$-covers problem~\cite{Iliopoulos98} and asks for the $k$ unit cells which have been used, for example in a Penrose tiling~\cite{bursill1985penrose}.

\section{Applications in Computer Graphics}
\label{sec:graphics}
Consider the task of producing huge, unique maps for games, such as mazes or dungeons. Without procedural terrain generation this task is anything between infeasible and impossible, depending on the desired size and the available time and budget. Many games use Wang Tiles~\cite{Derouet-Jourdan17,kopf2006recursive} to produce huge maps (an interesting example is the Infamous game produced by Sucker Punch). They have recently garnered around them a very large community.

Wang tiles are formal systems visually modeled by square tiles with colors on each side. Two Wang tiles may only be tiled along an edge if the colors match. The most popular problems concerning them were: whether a set of Wang tiles can cover the plane and whether this can be done in a periodic way~\cite{jeandel2015aperiodic}.
\begin{figure}[!ht]
\begin{center}
\begin{tikzpicture}
\draw[very thick] (0,0) -- (1.5,0) -- (1.5,1.5) -- (0,1.5) -- (0,0);
\draw[very thick] (0,0) -- (1.5,1.5);
\draw[very thick] (1.5,0) -- (0,1.5);
\node[anchor=north] at (0.75,1.5) {$\sigma_\mathcal{N}$};
\node[anchor=east] at (1.5,0.75) {$\sigma_\mathcal{E}$};
\node[anchor=south] at (0.75,0) {$\sigma_\mathcal{S}$};
\node[anchor=west] at (0,0.75) {$\sigma_\mathcal{W}$};

\draw[very thick, <->] (1.75, 0.75) -- (2.25, 0.75);

\draw[very thick] (2.5,0) -- (4,0) -- (4,1.5) -- (2.5,1.5) -- (2.5,0);
\filldraw (2.5,0) rectangle (3,0.5);
\filldraw (2.5,1) rectangle (3,1.5);
\filldraw[gray] (3,0.5) rectangle (3.5,1);
\filldraw (3.5,0) rectangle (4,0.5);
\filldraw (3.5,1) rectangle (4,1.5);
\node[anchor=north] at (3.25,1.5) {$\sigma_\mathcal{N}$};
\node[anchor=east] at (4.1,0.75) {$\sigma_\mathcal{E}$};
\node[anchor=south] at (3.25,0) {$\sigma_\mathcal{S}$};
\node[anchor=west] at (2.4,0.75) {$\sigma_\mathcal{W}$};
\end{tikzpicture}
\end{center}
\end{figure}

A Wang tile can also be represented as a 3-by-3 image. Two such images may be tiled together either along an edge or a corner. The formal system isomorphism is trivial: two 3-by-3 images may be tiled together on an edge if the respective colors on the Wang tiles match. This is very much like String Covers, except two such images may never be tiled one alongside another. 

Consider the following problems:

\begin{problem}[Minimal Wang Cover]
Given a tiling of some Wang Tiles check if there exists a periodic pattern covering it.
\end{problem}

\begin{problem}[k-Wang Covers]
Given a tiling of some Wang Tiles check if there exist k patterns which, when tiled together cover the image.
\end{problem}

\begin{problem}[Approximate Wang Cover]
Given a tiling of some Wang Tiles find the minimal number of pixels to be changed for it to be covered by a single periodic pattern.
\end{problem}

When given a 3-tall image the first two collapse to vectorial String Cover and vectorial k-Covers. For the last one, we must also impose that the black and gray pixels which we added ourselves are never corrupted. Thus we impose that the distance between two tilings is infinite if a black or gray pixel is corrupted. Hence it is equivalent to the Approximate String Cover of Amir et al. with an almost-Hamming metric.

\begin{problem}[Generalization to pseudo-metrics]
\label{prob:generalization_pseudometrics}
Given a compression palette $\Gamma\subseteq\Sigma$ and an algorithm that is consistent with respect to the colors it replaces i.e. $\mathcal{A}:\Sigma\rightarrow \Gamma$ and a tiling of some Wang Tiles, check if the solutions to the above problems change.
\end{problem}

The last problem is not important from a computational perspective; in fact it is quite trivial, but it gives substance to the pseudo-metric variations of String Cover problems.

Given computationally efficient algorithms that solve Problem~\ref{prob:generalization_pseudometrics}, there are several interesting applications in computer aided design (see e.g.~\cite{Derouet-Jourdan17}). 
One use of Wang tiles is procedure terrain generation in video games. If a player knows that the game he is playing uses Wang tiles, he can use an image cover algorithm to predict the next challenge.  
Another application is image compression: we can use these algorithms on images produced by designers in order to extract textures or motifs. 

Consider a game with hexagonal tiles that wants to make use of Perlin noise~\cite{perlin2002improving}. It is unnatural that it be used purely, since the rectangular lattice is not actually legal. On the other hand, since we can obtain $\mathcal{L}$, we by default have a mapping $\phi^{-1}:\mathcal{L}\rightarrow\mathbb{Z}^n$. In this domain, our lattice is indeed rectangular. Thus, it is here that we should apply our Perlin noise.

\begin{definition}
Given a lattice $\mathbb{Z}^n\overset{\phi}\rightarrow \mathcal{L}\subseteq\mathbb{E}^n$ and a noise-function appropriate for rectangular latices $\mathcal{P}:\mathbb{Z}^n\rightarrow \mathbb{R}$, we can lift it to $\mathcal{L}$:
$$ \mathcal{L}\ni\mathbf{x}\rightarrow \mathcal{P}_\mathcal{L}\left(\mathbf{x}\right) = \mathcal{P}\left(\phi^{-1}\left(\mathbf{x}\right)\right)\in\mathbb{R} $$
\end{definition}

Thus we can define the Perlin noise appropriate for a given Wang system. Note that the magnitude of Perlin noise is an input parameter. Thus, without changing the game or inducing unnatural patterns, as a game developer we can easily add a diversity grade for games using Wang tiles for terrain generation.

\bibliographystyle{abbrv}
\bibliography{bibliography2}

\end{document}